\DeclareMathOperator*{\argmax}{arg\,max}
\newtheorem{theorem}{Theorem}
\newtheorem*{remark}{Remark}
\xpatchcmd{\@thm}{\thm@headpunct{.}}{\thm@headpunct{}}{}{}
\begin{document}
%
% paper title
% Titles are generally capitalized except for words such as a, an, and, as,
% at, but, by, for, in, nor, of, on, or, the, to and up, which are usually
% not capitalized unless they are the first or last word of the title.
% Linebreaks \\ can be used within to get better formatting as desired.
% Do not put math or special symbols in the title.
\title{Probabilistic Robust Small-Signal Stability Framework using Gaussian Process Learning}
%% To specify the authors when (number of affiliations <= 2)
\author{Parikshit~Pareek,~and~
        Hung~D.~Nguyen$^\star$\vspace{-5pt}
% <-this % stops a space
\thanks{$^\star$Corresponding Author}% <-this % stops a space
\thanks{Authors are with School of Electrical and Electronics Engineering, Nanyang Technological University, Singapore. e-mail: pare0001@ntu.edu.sg, hunghtd@ntu.edu.sg}
\thanks{Parikshit Pareek and Hung Nguyen are supported by NTU SUG, MOE, and NRF fundings.}}%

% make the title area
\maketitle

% As a general rule, do not put math, special symbols or citations
% in the abstract 
% The power systems operation depends upon a prediction of load and renewable sources, which has high levels of uncertainty due to varying load patterns and intermittent/distributed generation, etc., translating to changing operating conditions. The prediction errors and uncertainties make small-signal stability assessment challenging, especially with the increased dimension of the parameter space in large-scale systems.
\begin{abstract}
While most power system small-signal stability assessments rely on the reduced Jacobian, which depends non-linearly on the states, uncertain operating points introduce nontrivial hurdles in certifying the system’s stability. In this paper, a novel probabilistic robust small-signal stability (PRS) framework is developed for a power system based on Gaussian process (GP) learning. The proposed PRS assessment provides a robust stability certificate for a state subspace, such as that specified by the error bounds of the state estimation, with a given probability. With such a PRS certificate, all inner points of the concerned subspace will be stable with at least the corresponding confidence level. To this end, the behavior of the critical eigenvalue of the reduced Jacobian with state points in a state subspace is learned using GP. The proposed PRS certificate along with the \textit{Subspace-based  Search} and \textit{Confidence-based Search} mechanisms constitute a holistic framework catering to all scenarios. The proposed framework is a powerful approach to assess the stability under uncertainty because it does not require input uncertainty distributions and other state-specific input-to-output approximations. Further, the critical eigenvalue behavior in a state subspace is analyzed using an upper bound of the eigenvalue variations and their inferences are discussed in detail. The results on three-machine nine-bus WSCC system show that the proposed certificate can find the robust stable state subspace with a given probability.
\end{abstract}

\begin{IEEEkeywords}
  Probabilistic Robust Small-Signal Stability (PRS), Gaussian Process (GP) Learning, Stability under Uncertainty
\end{IEEEkeywords}

% Use this to place sponsorships
% \thanksto{\noindent $^*$ Corresponding Author.\\ Parikshit Pareek and Hung Nguyen are supported by NTU SUG, MOE, and NRF fundings. }
 
% \thanksto{\noindent $^*$ Corresponding Author.\\ Parikshit Pareek and Hung Nguyen are supported by NTU SUG, MOE, and NRF fundings. \\ Submitted to the 21st Power Systems Computation Conference (PSCC 2020). } 

\section{Introduction}
The small-signal stability is an integral part of power system stability assessment, referring to the system's capacity to withstand small-disturbances and remain in synchronism \cite{kundur1994power, Janusz}. Further, stability assessment under uncertainty is a crucial issue with the recent trend of integration of uncertain renewable sources and mobile electric vehicle loads. The necessary and sufficient condition for small-signal stability is that all eigenvalues of reduced system Jacobian must have a negative real part. This method of stability assessment requires evaluation of all or a set of eigenvalues of reduced Jacobian, which involves a matrix inverse. The reduced Jacobian based stability assessment works for a specified operating state, but becomes inapplicable in the presence of state uncertainties, mostly due to the non-linearity induced from the power flow Jacobian matrix inversion. Some important, relevant works on DAE and stability are presented in \cite{nguyen2017contraction, wolter, Longfamily, zz, dhagash, Molzahn, aolaritei2017distributed, ali2017transversality}. 

The probabilistic small-signal stability (PSSS) assessments have been developed to deal with the issue of stability under uncertainty \cite{burchett1978probabilistic,pans2005probabilistic,arrieta2007contingency}. The core idea is to derive the probability density function ($pdf$) of an uncertain system output, PSSS measure like critical eigenvalue or minimum damping ratio (MDR), based on a known uncertainty of input state \cite{preece2014probabilistic}. The methods of PSSS assessment broadly fall under two categories: i) numerical methods, and ii) analytical methods.

In numerical approaches, the Monte Carlo simulations (MCS) can be employed to determine the $pdf$ of the various stability indices \cite{xu2005probabilistic,rueda2009assessment,preece2014risk}. To improve computational performance, methods such as quasi-Monte Carlo \cite{huang2013quasi} have also been used to assess PSSS. Nevertheless, an essential requirement of a large number of simulations leads to computational burden too high to be implemented for detailed probabilistic studies.

Analytical methods, free from the imposition of parametric output distributions, are also used for PSSS assessment. These methods include probabilistic collocation method (PCM) \cite{preece2012probabilistic}, point estimate methods (PEMs) \cite{rueda2012probabilistic}, analytical cumulant-based approaches \cite{wang2000improved}. A comprehensive comparative study is presented in \cite{preece2014probabilistic}. These methods suffer from issues related to complex formulation and inaccurate first and second-order approximations of sensitivities of eigenvalues. The detailed review of these techniques and approaches can be found in \cite{xu2017probabilistic,wang2016probabilistic,krishan2017analysis,hasan2019existing}. Regardless of the method employed, all these PSSS assessment methods tried to solve specific uncertainty issues considering a typical $pdf$ for uncertainties in the wind, solar, or load. The modeling error in input uncertainty descriptions and effects of assumptions taken get propagated to output $pdf$. Further, these methods do not provide insight into critical eigenvalue behavior in state-space and are mostly limited to one or two-dimensional subspace at a time.

In this paper, we present a novel probabilistic robust small-signal stability (PRS) framework for small-signal stability assessment in a state subspace. This state subspace can be defined based on measurement errors or the level of external disturbances that move the system within such a sub-domain. Further, we define PRS as follows. If all the inner points of a subset of the state-space are small-signal stable with \textit{at least} a given level of confidence, then such a subset is called probabilistic robust stable subspace with respect to such a confidence level. In other words, PRS is concerned that for a given uncertain state-subspace $\mathcal{X}$ in state space, what the probability with that any state point $\mathbf{x} \in \mathcal{X}$ is stable. To the best of our knowledge, this is the first instance when the PRS certificate has been developed and proposed using GP learning for a power system. Different from some existing Monte Carlo-based approaches, our proposed PRS framework does not characterize how many stable operating points in $\mathcal{X}$. Specifically, our goal is to quantify the probability with that any operating point $\mathbf{x} \in \mathcal{X}$ is stable. An advantage of this PRS is that it provides a non-parametric, computationally efficient, and less complex modeling alternative for probabilistic small-signal stability assessment.

More importantly, the proposed PRS framework does not require uncertainty modeling, such as a prior distribution of uncertain inputs, and hence provides a generalized framework for the stability under uncertainty. The main focus of this PRS framework is providing a PRS certificate for a given state subspace being PRS. Along with the certificate, we formulate \textit{Subspace-based Search} and \textit{Confidence-based Search} problems for cases wherein PRS certification can not be verified. The framework is built upon the Gaussian process upper confidence bound (\textit{GP-UCB}) search algorithm \cite{zhai2019region}. The \textit{GP-UCB} is used for sampling the state points inside state subspace $\mathcal{X}$ to learn the behaviors of the critical eigenvalues, which are closest to the imaginary axis for a small-signal stable system. 

The main contributions of the work are as follows: 
\begin{itemize}
    \item Defining probabilistic robust small-signal stability (PRS) and developing a certificate for a subspace to be PRS with a given confidence level. 
    \item Development of a PRS framework with \textit{Subspace-based Search} and \textit{Confidence-based Search} for a state subspace that is not satisfying the PRS certificate with a given probability.
    \item Developing a novel and fast GP learning scheme to learn and analyze the critical eigenvalue behavior in a multi-dimensional state-subspace. 
\end{itemize}
\color{black}

The objective of this work is to present the novel idea of PRS for the power system. We first build up the background in Section \ref{sec:back} by providing a brief review of small-signal stability assessment, GP, and \textit{GP-UCB}methods. From this, we build the PRS framework, which is presented in Section \ref{sec:prs}. This section includes the main results on PRS certificate with details on the \textit{Subspace-based  Search} and \textit{Confidence-based Search} mechanisms. The simulations and discussions of results are provided in Section \ref{sec:dis} while conclusions are drawn in Section \ref{sec:con} with future scope. We use stability to refer to small-signal stability for brevity.

% \color{red}
% This work will develop a robust stability assessment which can verify the system's stability where the operating states are uncertain. We further assume that the operating point lies within a state subspace specified by, for example, the error bound of state estimation. 

% The uncertain operating state makes the symmetric part $L$ also uncertain. We, therefore, propose GP regression for learning the maximum eigenvalue of uncertain $L$ as an unknown function $h(\mathbf{x})$ that is a GP sequentially measured by
% \begin{equation*}
%     q^{(i)} = h(\mathbf{x}^{(i)})+\epsilon, \quad i\in Z^{+}.
% \end{equation*}

% Here $q^{(i)}$ refers to the observed function value for the input $\mathbf{x}^{(i)}$ at the $i$-th sampling, and the state-estimation-related noise $\epsilon$. With the GP approach, we can obtain the posterior distribution over $h(\mathbf{x})$. The main result is the following (see also \cite{zhai2019region}).

\color{black}

\section{Background}\label{sec:back}
The proposed PRS framework in this work has three background building blocks. This section presents these three blocks, namely: Power System Modeling and Stability Assessment, GP and GP-UCB. 

\subsection{Power System Modeling and Stability Assessment}

Consider the power system dynamics that can be expressed as semi-explicit DAE as:
\begin{equation}
\begin{aligned}\label{eq:DAE}
    \mathbf{\dot x} &= f(\mathbf{x},\mathbf{y}) \\
             {0}    &= g(\mathbf{x},\mathbf{y})
\end{aligned} %\vspace{-9mm}
\end{equation}
Here, $\mathbf{x},\, \mathbf{y}$ are dynamic and algebraic variable vectors respectively and $f(\cdot)$ and $g(\cdot)$ are sets of differential and algebraic equations respectively, expressing the power system's behavior. The DAE system can be linearized and expressed as: 
\begin{equation}
\begin{aligned}\label{eq:DAE}
    \delta \mathbf{\dot x} &= A \, \delta \mathbf{x}+ B \, \delta \mathbf{y} \\
             {0}    & = C \,\delta \mathbf{x}+ D \, \delta \mathbf{y}
\end{aligned} %\vspace{-9mm}
\end{equation}

Conventionally, to assess the stability of the DAE system, we rely on the reduced Jacobian matrix obtained by eliminating the algebraic variables $\mathbf{y}$ \cite{kundur1994power}. With an invertible matrix $D$, we have the reduced Jacobian $J_r = A - B D^{-1} C$. The linearized DAE system is small-signal stable at the base operating point $\mathbf{x}^\star$ if and only if the reduced Jacobian evaluated at $\mathbf{x}^\star$ is a Hurwitz stable matrix, i.e., $\max_i \left( \Re(\lambda_i(J_r)) \right) < 0$, where $\lambda_i(J_r)$ be the $i^{th}$ eigenvalue of $J_r$ \cite{pareek2018sufficient}. We use $\lambda_c(\mathbf{x})$ to denote the function describing the real part of the maximum eigenvalue of $J_r(\mathbf{x})$ as the state $\mathbf{x}$ varies. Therefore, the behavior of this function $\lambda_c(\mathbf{x})$ dictates the small-signal stability of the system.

In the power system, some attempts have been made to learn and estimate the movement of critical eigenvalue \cite{yang2007critical,luo2009invariant}. The numerical results of these works indicate that eigenvalues of power system Jacobian are continuous in continuous system state space. An important research direction related to these numerical approaches is based on eigenvalue sensitivities with respect to states \cite{yang2007critical,luo2009invariant}. The downsides of this direction have been discussed in \cite{pareek2018sufficient} and others. This paper, however, does not rely on sensitivities but attempts to learn the function representing the critical eigenvalues when the states change. The learned function can benefit other operating procedures in power systems concerning eigenvalues. 

In order to apply GP, we rely on the fact that the eigenvalues of a matrix are continuous functions of its entries. This property is shown in Theorem 5.2 \cite{serre2010matrices}. The details and discussion on continuity of eigenvalues can also be obtained from Section 5.2.3 of \cite{serre2010matrices}. An intuitive argument can be found in \cite{meyer2000matrix}. The roots of a polynomial equation are shown to depend continuously on coefficients of the polynomial \cite{harris1987shorter}. The eigenvalues are in fact roots of \textit{Characteristic Polynomial} of a matrix $M$ and given as $h(\lambda)=\det\{\lambda I-M\}$, a monic polynomial. Further, the coefficients of $h(\lambda)$ can be expressed in terms of the sum of principal minors of $M$. Each of these principle minors depend on the coefficient $m_{ij}$ of matrix $M$ establishing the continuous dependency of roots of $h(\lambda)$ on entries $m_{ij}$ of matrix $M$. Therefore, roots of \textit{Characteristic Polynomial}, i.e., eigenvalues are a continuous function of entries of the matrix. As the maximum operator over a set of continuous values is continuous, the critical eigenvalue of the matrix $J_r$ is continuous with variations in matrix entries. 

It is essential to note the following. In the present work, we are interested in exploring the neighborhood subspace of an operating point. This subspace can be taken as continuous as the abrupt state changes are not under consideration of this work. In various power system operations and conditions, the state-space cannot be considered as continuous. These situations arrive mainly during discrete changes such as contingencies or instances of controller saturation. Identification of continuous subspace is critical in such cases before applying the proposed method.

The reduced Jacobian is a non-linear function of states $J_r(\mathbf{x})$, and so is critical eigenvalue $\lambda_c(\mathbf{x})$. Due to involvement of the matrix inverse, obtaining analytical expression for $\lambda_c(\mathbf{x})$  is not possible. Further, in the situation where the state vector $\mathbf{x}$ is uncertain and can vary in a state subspace $\mathcal{X}$, the $\lambda_c(\mathbf{x})$ will be uncertain and more difficult to estimate. Therefore, we propose this GP learning based method to learn the behavior of $\lambda_c(\mathbf{x}) $ with $\mathbf{x} \in \mathcal{X}$.

\subsection{Gaussian Process Regression}
In the Bayesian optimization paradigm, where the exact objective function is not known, the Gaussian process is used extensively as a modeling tool. The GP is developed as an extension of multi-variate Gaussian and can be considered as a distribution over random functions \cite{williams2006gaussian}.  The Gaussian process is a non-parametric method, hence suitable as a modeling method for probability distributions over functions. 

In power systems, the use of GP has been restricted to the forecasting applications for power and load. The GP has been applied for wind power forecast \cite{lee2013short,chen2013wind,yan2015hybrid}, solar power forecast \cite{sheng2017short}, and electricity demand forecast \cite{blum2013electricity,yang2018power,van2018probabilistic}. Other then these forecasting works, the idea of using GP to learn the dynamics and stability index behavior has not been explored by the power system research community.

First, we define a general framework for GP regression. Let, a training data set $\mathcal{D}=\{\mathbf{x}^{(i)},\hat{\lambda}_c (\mathbf{x}^{(i)})\}^m_{i=1}$ where $\hat{\lambda}_c (\mathbf{x}^{(i)})$ is the observed function value for input vector $\mathbf{x}^{(i)} \in \mathbb{R}^n$ at the $i^{th}$ step. Then the GP regression model can be given as \cite{williams2006gaussian}:
\begin{align}\label{eq:reg}
        \hat{\lambda}_c(\mathbf{x}^{(i)}) = \lambda_c(\mathbf{x}^{(i)})+\epsilon^{(i)}, \quad i=1 \dots m
\end{align}
Here, $\epsilon^{(i)}$ are independent and identically distributed noise variable with zero-mean, $\sigma_n$ standard deviation normal distribution $\left ( \mathcal{N}(0,\sigma_n^2) \right)$. Interested reader can look into \cite{williams2006gaussian} for details of GP fundamentals.

In this work, the unknown critical eigenvalue function is approximated by GP regression. The covariance or kernel function $k(\mathbf{x},\mathbf{x}')$, brings our understanding and assumptions about the function into GP.  A sample set $\hat{\lambda}_{c_m}=[\hat{\lambda}_c(\mathbf{x}^{(1)}), \dots, \hat{\lambda}_c(\mathbf{x}^{(m)})]^T$ at operating points $\mathcal{D}_m = \{\mathbf{x}^{(1)}, \dots ,\mathbf{x}^{(m)}\}$ with Gaussian noise $\epsilon $, and the analytic formula set can be obtained for posterior distribution corresponding to (\ref{eq:reg}). 
\begin{subequations}
\begin{align}
    \mu_m(\mathbf{x}) &= k_m(\mathbf{x})^T(K_m+\sigma^2_n I)^{-1}q_m\\
    k_m(\mathbf{x},\mathbf{x}') &= k(\mathbf{x},\mathbf{x}')-k_m(\mathbf{x})^T(K_m+\sigma^2_n I)^{-1}k_m(\mathbf{x'})\\
    \sigma^2_m(\mathbf{x}) &=  k_m(\mathbf{x},\mathbf{x})
\end{align}
\end{subequations}

Here,  $\mu_m(\mathbf{x})$ is mean,  $k_m(\mathbf{x},\mathbf{x}')$ is covariance and variance is indicated by $\sigma^2_m(\mathbf{x})$. The $\mathbf{x}$ and $\mathbf{x}'$  are two sample operating points from set $\mathcal{D}_m$. The $k_m(\mathbf{x}) = [k(\mathbf{x}^{(1)},\mathbf{x}), \dots,\mathbf{x}^{(m)},\mathbf{x}) ]$ and $K_m=[k(\mathbf{x},\mathbf{x}')]$. In this work, we infuse our prior knowledge about critical eigenvalue function as the squared exponential covariance function with zero mean and unit characteristic length.

\subsection{GP-UCB}
The Gaussian process upper confidence bound (GP-UCB) algorithm is an intuitive Bayesian method \cite{srinivas2009gaussian} for sampling. For a given $\delta \in (0,1)$, in a state subspace $\mathcal{X}$, our goal is to learn the mean $\mu(\mathbf{x})$ for critical eigenvalue function $\lambda_{c}(\mathbf{x})$ with least standard deviation $\sigma(\mathbf{x})$ and confidence level $1-\delta$. A combined strategy to strike the balance between exploration and exploitation can be used for sampling the point $\mathbf{x}^{(i)}$. With $\beta_i$ taken independent of state vector the sampling strategy will be: 
\begin{align}\label{eq:gpucb}
    \mathbf{x}^{(i)} = \argmax_{\mathbf{x}^{(i)} \in \mathcal{X}}  \left \{ \mu_{i}(\mathbf{x}) + \sqrt{\beta_{i+1}}\sigma_{i}(\mathbf{x}) \right \}
\end{align} 

Here, (\ref{eq:gpucb}) suggests that $\mathbf{x}^{i}$ is selected where $\mu_{i}(\mathbf{x}) + \beta_{i+1}^{1/2}\sigma_{i}(\mathbf{x})$ maximizes. The $\mu_{i}(\mathbf{x})$ contributes to enlarging the level set of critical eigenvalue while $\sigma_{i}(\mathbf{x})$ helps in minimizing the uncertainty. Interested reader can obtain the details of this sampling strategy and \textit{GP-UCB} from \cite{srinivas2009gaussian}.

\section{Main Result: PRS Framework}\label{sec:prs}
In this section, we will present the main result for developing the PRS framework. Prior to this result, we note that various works have been proposed under the generic name \textit{probabilistic robust control} to perform robust control under stochastic uncertainties \cite{calafiore2006scenario}. The system performance is said to be robustly satisfied (with a fixed probability) if a guarantee is provided against \textit{almost all} of the possible uncertainties \cite{calafiore2007probabilistic}. A similar analogy in the context of power system small-signal stability can be used here for the PRS framework. 

Given $\delta \in (0,1)$, if all the points in a subspace are stable with the probability of \textit{at least} $1-\delta$, then that subspace is called PRS subspace. For developing the basis of PRS certificate, we first present a result on regret bound, i.e., error bound, in Theorem \ref{theo:main} for \textit{GP-UCB} represented in (\ref{eq:gpucb}). 

\begin{theorem}\label{theo:main}
Let $\lambda_c(\mathbf{x})$ be the critical eigenvalue function in state $\mathbf{x}$ with the noise $\epsilon$ bounded by $\sigma_n$. Then the following holds with the probability of at least $1-\delta$ with $\delta\in(0,1)$
\begin{align}\label{eq:rbound}
    |\lambda_c(\mathbf{x})-\mu_{m}(\mathbf{x})|\leq \sqrt{\beta_{m+1}} \, \sigma_{m}(\mathbf{x}), \quad \forall \mathbf{x}\in \mathcal{X}.
\end{align}
Here $\mathcal{X}$ is the sample space wherein the states $\mathbf{x}$ lie, and $m$ is the number of sampling points.  $\beta_m=2\|V\|_k^2+300\gamma_m\ln^3{(m/\delta)}$ defined in \cite{zhai2019region}.
\end{theorem}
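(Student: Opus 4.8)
The plan is to recognize (\ref{eq:rbound}) as the standard \textit{GP-UCB} confidence bound in the agnostic (RKHS) setting, and therefore to follow the argument of Srinivas et al.~\cite{srinivas2009gaussian} as specialized in \cite{zhai2019region}. The essential modeling assumption is that $\lambda_c$ lies in the reproducing kernel Hilbert space (RKHS) $\mathcal{H}_k$ induced by the squared-exponential kernel, with bounded norm $\|\lambda_c\|_k \le \|V\|_k$; this is justified by the continuity and smoothness of $\lambda_c(\mathbf{x})$ established in Section~\ref{sec:back}. With this in hand, I would recall that the posterior mean $\mu_m$ coincides with the kernel ridge-regression estimate built from the $m$ samples with regularizer $\sigma_n^2$, and that $\sigma_m^2(\mathbf{x})$ is the corresponding posterior variance.

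First I would decompose the pointwise error $\lambda_c(\mathbf{x}) - \mu_m(\mathbf{x})$ into a deterministic approximation (bias) term and a stochastic (noise) term. Using the reproducing property together with Cauchy--Schwarz in $\mathcal{H}_k$, the bias term is controlled by a multiple of $\|V\|_k\,\sigma_m(\mathbf{x})$, which accounts for the $2\|V\|_k^2$ contribution inside $\beta_{m+1}$. The noise term is a weighted sum of the observation errors $\epsilon^{(i)}$, whose weights are exactly the predictive kernel coefficients; its magnitude is again proportional to $\sigma_m(\mathbf{x})$ times a data-dependent complexity factor.

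The core step is then to bound this stochastic factor uniformly over $\mathbf{x}\in\mathcal{X}$ with probability at least $1-\delta$. Since $\epsilon$ is bounded by $\sigma_n$, each $\epsilon^{(i)}$ is sub-Gaussian, so I would apply a self-normalized martingale concentration inequality to the noise sequence. The effective dimension of the problem enters through the maximum information gain $\gamma_m = \tfrac{1}{2}\max\log\det\!\big(I + \sigma_n^{-2}K_m\big)$, and combining the concentration bound with a discretization / union-bound argument over $\mathcal{X}$ produces the $300\,\gamma_m\ln^3(m/\delta)$ term. Adding the bias and noise contributions and collecting constants yields $\beta_{m+1}$ exactly as stated, which establishes the bound.

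I expect the main obstacle to be the self-normalized concentration step: obtaining an estimate that holds \emph{uniformly} over the continuous subspace $\mathcal{X}$ rather than merely pointwise, while correctly tracking the information-gain factor $\gamma_m$ and the precise constants (the factor $300$ and the cubed logarithm). This is the technically delicate part inherited from \cite{srinivas2009gaussian}; by comparison, the bias decomposition and the verification of RKHS membership are comparatively routine.
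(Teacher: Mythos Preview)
Your proposal is correct in substance and faithfully outlines the RKHS/agnostic-setting argument of Srinivas et al.\ that underlies the bound. However, the paper's own proof is far shorter: it simply states that the result follows directly from Theorem~6 in \cite{srinivas2012information} and Theorem~1 in \cite{zhai2019region}, without reproducing any of the bias--noise decomposition, the self-normalized concentration, or the information-gain bookkeeping you describe. In other words, the paper treats Theorem~\ref{theo:main} as a black-box citation, whereas you have sketched the internal machinery of those cited theorems. Your outline therefore goes well beyond what the paper actually provides; what it buys you is a genuine understanding of where $\beta_m=2\|V\|_k^2+300\gamma_m\ln^3(m/\delta)$ comes from and why the RKHS membership of $\lambda_c$ (via the continuity argument in Section~\ref{sec:back}) is the relevant hypothesis, while the paper's approach buys brevity at the cost of leaving the technical content entirely to the references.
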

\begin{proof}
The results follow directly from Theorem 6 in \cite{srinivas2012information} and Theorem 1 in \cite{zhai2019region}. 
\end{proof}

Most importantly, following Theorem \ref{theo:main}, we obtain a probabilistic robust stability certificate as the following. 

\begin{theorem} {(PRS certificate)} \label{theo:prsc}
For a given $\delta \in (0,1)$, the linearized DAE system (\ref{eq:DAE}) is probabilistic robust small-signal stable (PRS) in a state subspace $\mathcal{X}$ with probability $1-\delta$ if
\begin{equation}
    p_m(\mathbf{x,\delta}) < 0
\end{equation}
where $p_m(\mathbf{x,\delta}) = \max_{\mathbf{x}\in \mathcal{X}} \left \{ \mu_m(\mathbf{x}) +  \sqrt{\beta_{m+1}} \, \sigma_m(\mathbf{x}) \right \}$ with $m$ sampling points.
\end{theorem}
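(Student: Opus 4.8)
The plan is to bridge the probabilistic regret bound of Theorem \ref{theo:main} with the deterministic Hurwitz criterion $\lambda_c(\mathbf{x})<0$ and the definition of PRS. The key observation is that Theorem \ref{theo:main} already supplies a \emph{uniform} (joint) guarantee: with probability at least $1-\delta$, the two-sided bound $|\lambda_c(\mathbf{x})-\mu_m(\mathbf{x})|\leq\sqrt{\beta_{m+1}}\,\sigma_m(\mathbf{x})$ holds \emph{simultaneously} for every $\mathbf{x}\in\mathcal{X}$. First I would extract the one-sided consequence of this bound, namely
\begin{align}
    \lambda_c(\mathbf{x}) \leq \mu_m(\mathbf{x}) + \sqrt{\beta_{m+1}}\,\sigma_m(\mathbf{x}), \quad \forall \mathbf{x}\in\mathcal{X},
\end{align}
whose right-hand side is exactly the upper confidence bound that is maximized in the definition of $p_m(\mathbf{x},\delta)$.

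Next I would pass to the supremum over the subspace. Because the displayed bound holds pointwise, each value $\lambda_c(\mathbf{x})$ is dominated by the maximizer of the right-hand side over $\mathcal{X}$, so that
\begin{align}
    \lambda_c(\mathbf{x}) \leq \max_{\mathbf{x}'\in\mathcal{X}}\left\{ \mu_m(\mathbf{x}') + \sqrt{\beta_{m+1}}\,\sigma_m(\mathbf{x}') \right\} = p_m(\mathbf{x},\delta), \quad \forall \mathbf{x}\in\mathcal{X}.
\end{align}
Invoking the hypothesis $p_m(\mathbf{x},\delta)<0$ then forces $\lambda_c(\mathbf{x})<0$ at every $\mathbf{x}\in\mathcal{X}$; by the reduced-Jacobian Hurwitz condition recalled in Section \ref{sec:back}, this means every inner point of $\mathcal{X}$ is small-signal stable. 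Since the whole chain of inequalities is conditioned on the single high-probability event of Theorem \ref{theo:main}, I would conclude that all points of $\mathcal{X}$ are stable with probability at least $1-\delta$, which is precisely the statement that $\mathcal{X}$ is PRS at confidence $1-\delta$.

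The argument is short, but the point I would be most careful about is the \emph{joint versus pointwise} character of the probability. The claim ``every $\mathbf{x}\in\mathcal{X}$ is stable'' is an event about the entire subspace at once, so it would be a mistake to apply Theorem \ref{theo:main} separately at each point and then union-bound, which would erode the confidence level. I would therefore stress that the $\forall\mathbf{x}$ quantifier lives \emph{inside} the $1-\delta$ event of Theorem \ref{theo:main}, so the confidence transfers intact with no extra union bound or independence assumption. A secondary remark is that the maximum defining $p_m(\mathbf{x},\delta)$ is attained because $\mu_m(\cdot)$ and $\sigma_m(\cdot)$ are continuous on the closed and bounded subspace $\mathcal{X}$; were it replaced by a supremum the same inequality chain would still close.
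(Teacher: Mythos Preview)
Your proof is correct and is exactly the argument the paper has in mind; the paper's own proof is a single sentence stating that the certificate follows directly from Theorem~\ref{theo:main}, and your write-up simply unpacks that implication (one-sided bound, pass to the max over $\mathcal{X}$, invoke the Hurwitz condition). Your emphasis that the $\forall\mathbf{x}$ quantifier already lives inside the $1-\delta$ event of Theorem~\ref{theo:main} is the right point to highlight, and it is implicit rather than spelled out in the paper.
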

\begin{proof}
The definition of PRS certificate $ p_m(\mathbf{x,\delta}) < 0$ can be obtained directly from Theorem \ref{theo:main}. 
\end{proof}

The PRS certificate $ p_m(\mathbf{x,\delta}) < 0 $ verifies that with the probability $1-\delta$ any state vector $\mathbf{x}$ in the given state subspace $\mathcal{X}$ is small-signal stable. This idea does not quantify the number of points that are stable in a subspace but gives the probability with that any inner point is stable in the considered subspace. 

Based upon Theorem \ref{theo:prsc}, we continue the PRS framework, which covers various conditions as:
\begin{enumerate}
    \item \textit{If the PRS certificate holds or} $p_m(\mathbf{x,\delta})< 0$:\\
    The state subspace $\mathcal{X}$ is probabilistic robust stable with the confidence level of $1-\delta$.
    \item \textit{Otherwise}: 
    \begin{enumerate}
       \item \textbf{Subspace-based  Search:} \\
       Find $\mathcal{X'} \subsetneq \mathcal{X}$ such that $ p_m(\mathbf{x,\delta})< 0 $, $ \forall \, \mathbf{x} \in \mathcal{X'} $.
       \item \textbf{Confidence-based Search:} \\
       Find $\delta' > \delta$ such that $ p_m(\mathbf{x,\delta'})< 0$, $ \forall \, \mathbf{x} \in \mathcal{X} $.
    \end{enumerate}
\end{enumerate}

The rationale behind the second step of the PRS framework is that in case the PRS certification cannot be established for a given set $\mathcal{X}$, one can proceed with two alternative searches. \textit{Subspace-based Search} results into a subspace $\mathcal{X'}$, of state space, which is PRS with probability $1-\delta$. This $\mathcal{X'}$ is also a subspace of the original state subspace $\mathcal{X}$. The other alternative \textit{Confidence-based Search} results in a lower confidence level $1-\delta'$, by which the original subspace $\mathcal{X}$ is PRS. Therefore, with the PRS certificate and two alternative searches, the PRS framework is complete as it caters to all possible scenarios.  

\begin{remark}
The existence of $\mathcal{X'}$  in \textit{Subspace-based Search} is guaranteed if $\exists \, \mathbf{x}^\star \in \mathcal{X}$ such that $\max_i \left( \Re(\lambda_i(J_r(\mathbf{x}^\star ))) \right) < 0$. In many cases, $\mathbf{x}^\star$ is selected as the base stable operating point around that the subspace $\mathcal{X}$ is constructed. Therefore, we can at least find $\mathcal{X'}$ containing the stable base point.
\end{remark}

For \textit{Confidence-based Search}, by increasing $\delta$ to $\delta'$, we obtain a lower confidence level $1-\delta'$. However, we do not guarantee that the new confidence level $1-\delta'$ can be large enough for any practical meaning. 

Within our PRS framework, it is important to determine when the \textit{GP-UCB} search can be terminated or the search is completed. As the \textit{GP-UCB} attempts to minimize the learning uncertainty related to $\sigma_m(\mathbf{x})$ in \eqref{eq:gpucb}, the change in values of $\sigma_m(\mathbf{x})$ indicates the level of learning errors with increasing number of training samples $m$. Further, this learning level will also be reflected in variation of $p_m(\mathbf{x,\delta})$ while increasing $m$. Therefore, either of the two indicators, i.e., $\sigma_m(\mathbf{x})$ and $p_m(\mathbf{x,\delta})$, can be used to decide whether the \textit{GP-UCB} search for learning $\lambda_c(\mathbf{x})$ completes. Moreover, we propose to use the real part of critical eigenvalue as a stability index instead of minimum damping ratio (MDR) in this work. Although, the proposed framework can be extended to accommodate any stability index such as MDR. 
 
% \subsection{Effects of confidence interval ($1-\delta$)}
% On the number of points require to reach a particular level of $\sigma$. 
\color{black}
\section{Simulations and Discussions}\label{sec:dis}

In this work, we have used IEEE three-machine nine-bus system \cite{pai2012energy} for testing and validation of the proposed PRS framework. The system has three $PV$ buses with conventional generators at bus number $1$, $2$, and $3$. The generator $1$ is considered as slack bus and has the highest inertia constant. We construct the reduced system Jacobian ($J_r$) and test PRS certificate and framework for various $PV$ bus subspaces. As the PRS framework, in Section \ref{sec:prs},  is developed for a general \textit{n-dimensional} space vector, the extension to the load bus subspace is straight forward. We use $|V_i|$ to indicate the node voltage magnitude at $i^{th}$ bus, while $P_{g_i}$ and $Q_{g_i}$ to indicate real and reactive power output of generator connected to $i^{th}$ bus. All numerical values are in per unit ($pu$). A general set describing a subspace is indicated with $\mathcal{X}$ while a set for which the PRS certificate (Theorem \ref{theo:prsc}) can be satisfied is indicated with $\mathcal{X}_c$. The values of the small-signal stable base point are given in Appendix Table \ref{tab:base}. 

\begin{figure}[t]
    \centering
    \includegraphics[width=\columnwidth]{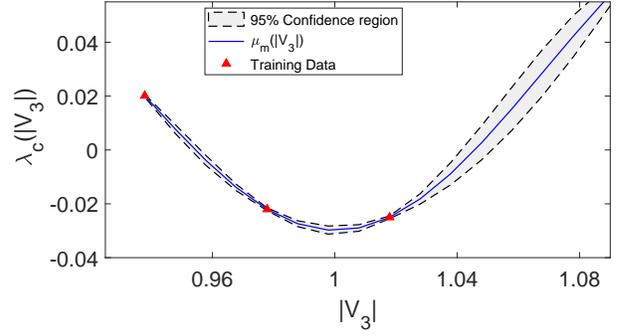}
    \caption{An illustration on \textit{GP-UCB} working on 1-D state subspace of $|V_3|$. The blue
line denotes mean values of the unknown function $\lambda_c(|V_3|)$, and the gray shade describes the 95\% confidence interval. The three red triangles represent three different sampling points.}
    \label{fig:gplearning}
\end{figure}
\begin{figure}[t]
    \centering
    \includegraphics[width=\columnwidth]{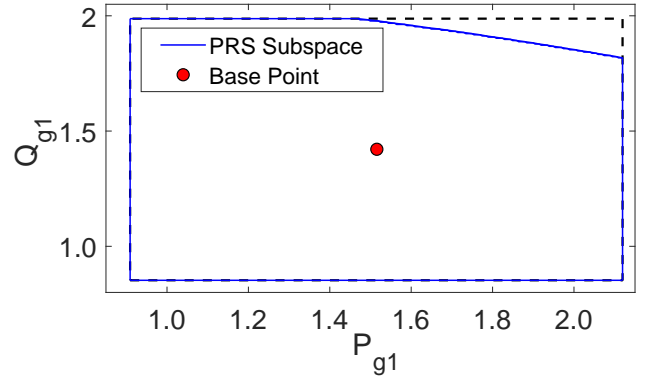}
    \caption{PRS certificate testing for $\mathcal{X}_1=\{ 0.909 \leq P_{g_1}\leq 2.119;\, 0.852 \leq Q_{g_1} \leq 1.988 \}$ indicated with black dashed rectangle}
    \label{fig:PgQg1}
\end{figure}

First, we discuss the GP learning process in \textit{one - dimensional} voltage magnitude $|V_3|$ state subspace. Figure \ref{fig:gplearning} shows the learning mechanism for $\lambda_c$ as a function of $|V_3|$ with three sampling points shown as red triangles ($m=3$). The mean $\mu_m(|V_3|)$ is shown in blue and $\pm\, 2\,\sigma_m(|V_3|)$ is indicated with gray area covering $95\%$ confidence interval. It is clear that for $|V_3| \geq 1.04$, the uncertainty is higher, leading to a larger gray area for achieving the $95\%$ confidence level. Therefore, it can be concluded that with a sufficiently large number of sampling points, the uncertainty in learning $\lambda_c(\mathbf{x})$ can be decreased to an acceptable level. Further, the objective of the PRS framework is to certify a neighborhood state subspace of a stable base point. It is not tasked to certify or evaluate the stability in the entire state space. This allows minimizing the uncertainty values efficiently with less number of sampling points.

\begin{figure}[t]
    \centering
    \includegraphics[width=\columnwidth]{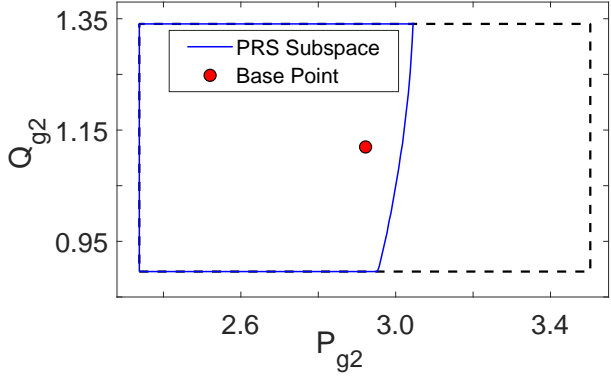}
    \caption{PRS certificate testing for $\mathcal{X}_2=\{ 2.338 \leq P_{g_2}\leq 3.503;\, 0.895 \leq Q_{g_2} \leq 1.341 \}$ indicated with black dashed rectangle}
    \label{fig:PgQg2}
\end{figure}

For power system stability, the state space with real and reactive power variables is crucial as they reflect the generator dynamics impact directly. Further, different generators have a different level of inertia, thus leading to a different stable subspace around the base point. To study PRS with real and reactive power variations, the PRS certificate verification is done for each generator for different subspace. The Figures \ref{fig:PgQg1}, \ref{fig:PgQg2} and \ref{fig:PgQg3} shows results for PRS testing in power subspace corresponding to generators $1$, $2$ and $3$ respectively. The black rectangle indicates the region covered by different subspace set $\mathcal{X}_1$, $\mathcal{X}_2$ and $\mathcal{X}_3$ while the region covered by blue line is subspace for which $\mu_m(\mathbf{x})\, + \, 2 \,\sigma_m(\mathbf{x}) \leq 0$. All these figures highlight the difference in the PRS subspace of different generators. The higher area covered by the blue line in Figure \ref{fig:PgQg1} is indicative of the fact that generator $1$ can handle higher variations in power before leading into instability. As $p_m(\mathbf{x,\delta}) \nless 0$ for sets  $\mathcal{X}_1$, $\mathcal{X}_2$ and $\mathcal{X}_3$, we can perform \textit{Subspace-based  Search} to find a PRS subspace.

The Figure \ref{fig:p1v1} shows two different subspace where PRS certificate validation is performed in $P_{g_1}-|V_1|$ subspace. Further, it shows a subspace in blue dash line for which $p_m(\mathbf{x,\delta}) < 0$. Therefore, every state point inside the blue rectangle has a $95\%$ chance of being stable. This PRS subspace is obtained using \textit{Subspace-based  Search}. Here, we use a subspace description with rectangular geometry for conveying the main idea, although any set description can be used with the proposed PRS framework. Figure \ref{fig:p1v1} also suggests that a maximum PRS subspace can also be found, which will have a lesser range in $|V_1|$ and higher in $P_{g_1}$ state dimension. This result is also indicative of the fact that the PRS subspace is non-unique, and there exist multiple such subspaces in state space varying in descriptions. Therefore, \textit{Subspace-based  Search} can be tasked to find the maximum range in the dimension which is most uncertain or exhibit larger variations than other states. This paper is focused on presenting the PRS framework idea; therefore, such customization is not in scope.

\begin{figure}[t]
    \centering
    \includegraphics[width=\columnwidth]{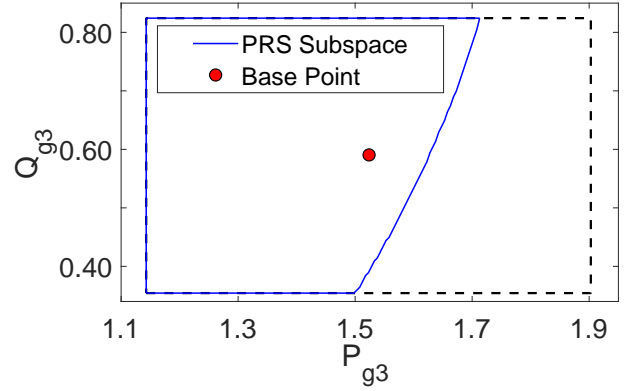}
    \caption{PRS certificate testing for $\mathcal{X}_3=\{ 1.143 \leq P_{g_3}\leq 1.914;\, 0.354 \leq Q_{g_3} \leq 0.824 \}$ indicated with black dashed rectangle}
    \label{fig:PgQg3}
\end{figure}

\begin{figure}[t]
    \centering
    \includegraphics[width=0.95\columnwidth]{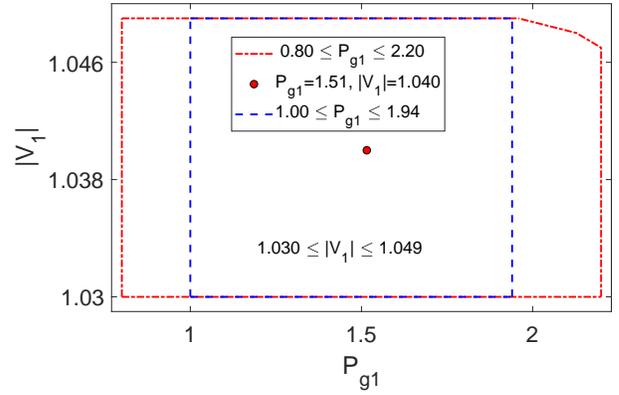}
    \caption{$P_{g_1}-|V_1|$ subspace $\mathcal{X}$ such that $p_m(\mathbf{x,\delta}) \nless 0 \, \forall \,  \mathbf{x} \, \in \mathcal{X} $ with red line and subspace $\mathcal{X}_c \subsetneq \mathcal{X}$ such that $ p_m(\mathbf{x,\delta}) < 0 \, \forall \,  \mathbf{x} \, \in \mathcal{X}_c $ with blue line}
    \label{fig:p1v1}
\end{figure}

Similar to the result shown in Figure \ref{fig:p1v1}, the Table \ref{tab:prsc} contains the dimensions of different $\mathcal{X}_c$ for different variables. We observed that $|V|$ is the limiting variable and has the least acceptable variations in comparison to the power subspace. Here, the objective is again to show that the PRS framework has been able to obtain meaningful $\mathcal{X}_c$ in different dimensions and variable subspace and $\mathcal{X}_c$ is not indicating largest possible subspace in Table \ref{tab:prsc}. 
% Table generated by Excel2LaTeX from sheet 'Sheet1'
\begin{table}[t]
  \centering
  \caption{PRS Certified Subspace $\mathcal{X}_c$ Type and Dimensions in $pu$}
  \bgroup
\def\arraystretch{1.2}
    \begin{tabular}{c|cccc}
      Space Type    & Variables & Minimum & {Maximum} & {$\Delta$} \\
          \hline
     \multirow{3}{*}{\shortstack{$|V|$ Space  \\ 3-Dimensional }} & $|V_1|$    & 1.032 & 1.048 & 0.017 \\
          & $|V_2|$    & 1.017 & 1.033 & 0.016\\
          & $|V_3|$    & 1.017 & 1.033 & 0.016 \\
    \hline
    \multirow{3}{*}{\shortstack{$P_g$ Space  \\ 3-Dimensional }} & $P_{g_1}$    & 1.478 & 1.554 & 0.076 \\
          & $P_{g_2}$    & 2.849 & 2.995 & 0.146 \\
          & $P_{g_3}$    & 1.486 & 1.562 & 0.076 \\
    \hline
    \multirow{6}{*}{\shortstack{$P_g-Q_g$ Space  \\ 6-Dimensional }}& $P_{g_1}$    & 1.478 & 1.554 & 0.076\\
          & $P_{g_2}$    & 2.849 & 2.995 & 0.146 \\
          & $P_{g_3}$    & 1.486 & 1.562 & 0.076 \\
          & $Q_{g_1}$    & 1.385 & 1.457 & 0.071 \\
          & $Q_{g_2}$    & 1.092 & 1.148 & 0.056 \\
          & $Q_{g_3}$    & 0.576 & 0.605 & 0.030 \\
    \hline
    \multirow{9}{*}{\shortstack{$P_g-Q_g-|V|$ Space  \\ 9-Dimensional }} & $P_{g_1}$    & 1.478 & 1.554 & 0.076 \\
          & $P_{g_2}$    & 2.849 & 2.995 & 0.146 \\
          & $P_{g_3}$    & 1.486 & 1.562 & 0.076 \\
          & $Q_{g_1}$    & 1.385 & 1.457 & 0.071 \\
          & $Q_{g_2}$    & 1.092 & 1.148 & 0.056 \\
          & $Q_{g_3}$    & 0.576 & 0.605 & 0.030 \\
          & $|V_1|$    & 1.035 & 1.045 & 0.010 \\
          & $|V_2|$    & 1.020 & 1.030 & 0.010 \\
          & $|V_3|$    & 1.020 & 1.030 & 0.010 \\
      \hline
    \end{tabular}%
    \egroup
  \label{tab:prsc}%
\end{table}%

The proposed GP learning method can be used to understand the behaviour of $\lambda_c(\mathbf{x})$. The Figure \ref{fig:v2v3} depicts variations in $\mu_m(\mathbf{x}) +  \sqrt{\beta_{m+1}} \sigma_m(\mathbf{x})$ in $|V_2|-|V_3|$ subspace around the base point. From Theorem \ref{theo:main}, it is clear that Figure \ref{fig:v2v3} represents upper bound of critical eigenvalue with probability $1-\delta$. This upper bound, in this subspace, is smooth and an edge near $|V_2|=|V_3|=1.05$ cross the stability boundary. Further, from Theorem \ref{theo:main} it can be concluded that if learning uncertainty is very less i.e. $\max \{\sigma_m(\mathbf{x})\} \lll 1 \, \forall \, \mathbf{x} \in \mathcal{X}$, then even with higher value of $\beta_{m+1}$ the difference between actual $\lambda_c(\mathbf{x})$ and its estimated mean $\mu_m(\mathbf{x})$ will be very small. The Figure \ref{fig:v2v3} indicate one such situation with $\max \left\{ \sigma_m(\mathbf{x})\right\} \sim 10^{-8}$. Therefore, in this situation, the plane shown very closely represent $\lambda_c(\mathbf{x})$. 

\begin{figure}[t]
    \centering
    \includegraphics[width= 0.8 \columnwidth]{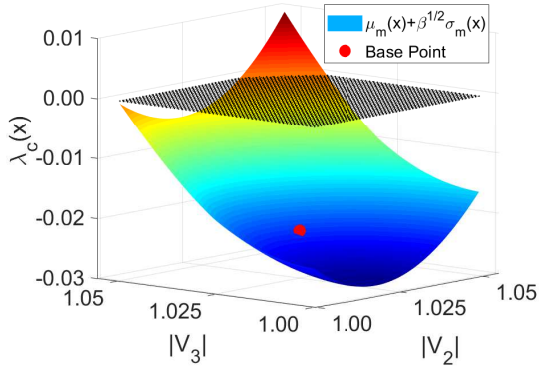}
    \caption{Critical eigenvalue ($\lambda_c$) upper bound plane in \textit{two-dimensional} $|V_2|-|V_3|$ subspace with $95\%$ confidence level, with $\max \left\{ \sigma_m(\mathbf{x})\right\} \sim 10^{-8}$}
    \label{fig:v2v3}
\end{figure}

\subsubsection*{Time Consumption Report}
All simulations in this work are performed using GPML \cite{rasmussen2010gaussian} with MATLAB 2018b on a machine with Intel Xeon E5-1630v4 having 3.70 GHz clock speed and 16.0 GB of RAM. It takes approximately $2.3~sec$ to obtain results of Figure \ref{fig:v2v3} while PRS certificate is verified in approximately $2.5~sec$ in Figure \ref{fig:p1v1}. The time consumption depends on the initial sampling point and can be improved using better initialization strategies, which are not in the scope of this work. Further, for higher dimensions and larger data sets, sparse Gaussian processes \cite{snelson2006sparse} can be used for less time consumption. 

\section{Conclusion}\label{sec:con}
In this paper, we present a novel probabilistic robust small-signal stability (PRS) framework based on the non-parametric Gaussian process to learn the behavior of the critical eigenvalue of the reduced Jacobian. In particular, we use GP to estimate the real-part of the critical eigenvalue by learning its upper bound and thus assess the system stability. The robustness of system stability is verified in the state subspace with a given probability by establishing a PRS certificate. In case the PRS certificate is not satisfied, we propose \textit{Subspace-based  Search} and \textit{Confidence-based Search} to complete PRS framework. The simulation results on the WSCC network illustrates the performance of the proposed method in certifying the PRS certificate and learning the critical eigenvalues.

The present work opens up a new dimension in the area of probabilistic stability assessment of the power system. The certificate presented here can be applied to larger test systems and can be further optimized, for example, to be less time-consuming. The learning of the critical eigenvalue behavior can provide a refined understanding of the stability under uncertainty. There also exists a potential for solving problems to find the largest PRS subspace for a given confidence level. Based on this work, one can design a general PSSS assessment method that neither require the $pdf$ of input uncertainty nor rely on state-specific approximations. These ideas will be explored further in upcoming works. 

\section*{Acknowledgement}
We thank Prof. Ashu Verma and Dr. Zhai Chao for their suggestions.

\section*{Appendix}

\begin{table}[h]
  \centering
  \caption{Small-Signal Stable Base Point}
    \bgroup
\def\arraystretch{1.2}
    \begin{tabular}{c|ccc}
          & \multicolumn{3}{c}{Bus Number} \\
          & \multicolumn{1}{c}{1} & \multicolumn{1}{c}{2} & \multicolumn{1}{c}{3} \\
    \hline
    Real Power $( P_g)$ & 1.515 & 2.922 & 1.523 \\
    Reactive Power $(Q_g)$ & 1.421 & 1.119 & 0.590 \\
    Voltage Magnitude $(|V|)$ & 1.040  & 1.025 & 1.025 \\
    \hline
    \end{tabular}%
    \egroup
  \label{tab:base}%
\end{table}%

\bibliographystyle{IEEEtran}
\bibliography{main}

\end{document}